\documentclass[letterpaper, 10 pt, conference]{ieeeconf}

\IEEEoverridecommandlockouts
\usepackage{times}
\usepackage{fancyhdr}
\usepackage{float}
\usepackage{cite}

\usepackage{listings}
\usepackage{booktabs}
\usepackage{subfig, bm}

\usepackage{tikz}
\usepackage{amssymb}
\usepackage{dsfont}
\usepackage{mdwlist}
\usepackage{hyperref}

\hypersetup{colorlinks,linkcolor={blue},citecolor={blue}} 
\usepackage{amsmath,graphicx,xcolor}
\usepackage{amsthm}
\usepackage{subfiles}
\usepackage{import}
\usepackage{dsfont}
\usepackage{xfrac}
\usepackage[font=small,labelfont=bf]{caption}
\usepackage{mathtools}
\usepackage{algorithm}
\usepackage{algpseudocode}
\usepackage{adjustbox}
\usepackage{pifont}
\usepackage{thmstyles}
\usepackage{stfloats}
\usepackage{cuted}
\usepackage{color}
\usepackage{multirow}
\usepackage{makecell} 

\algrenewcommand\algorithmicrequire{\textbf{Input:}}
\algrenewcommand\algorithmicensure{\textbf{Output:}}

\title{\LARGE \bf
Communication Architecture Co-Design for Distributed Control \\ via System Level Synthesis
}

\author{Jeonghyeon Noh and SooJean Han$^*$
\thanks{$^*$Corresponding author.Authors are with School of Electrical Engineering, Korea Advanced Institute of Science \& Technology (KAIST), Daejeon, Republic of Korea. E-mail: \texttt{\{jeonghyeon72, soojean\}@kaist.ac.kr}. The code of this paper is available at \url{https://github.com/outisKim/comm-arch-codesign-sls}.}
\thanks{This work was supported by the Institute of Information \& Communications Technology Planning \& Evaluation (IITP) grant funded by the Korea government (MSIT) (No. RS-2025-25442469, Development of Edge AI Server Technology with High-Performance and High-Reliability).}
}

\begin{document}
\maketitle


\thispagestyle{empty}
\pagestyle{empty}


\begin{abstract}
As networked dynamical systems grow in scale, distributed controller architecture design has become a critical issue. 
To address this, prior studies have incorporated the regularization for design (RFD) framework into system-level synthesis (SLS), focusing primarily on actuator and sensor placement. 
However, communication architectures also fundamentally impact the performance of distributed control, and scalable explicit co-design of the underlying communication topology within SLS remains challenging.
This paper proposes a scalable convex framework for the co-design of communication architectures and distributed controllers within the SLS paradigm. We explicitly characterize a communication topology-induced feasible subspace of system responses, parameterized by the communication adjacency matrix and separate from the physical plant topology.
By deriving a communication link norm from this characterized subspace, we formulate a convex SLS problem for communication architecture co-design.
A proximal ADMM scheme is employed to appropriately decompose the optimization problem in a manner that preserves scalability to large networks.
Numerical experiments validate the efficacy of the proposed framework in optimizing the trade-off between closed-loop performance and communication architecture complexity, while also confirming its scalability.
\end{abstract}


\section{INTRODUCTION}
Large-scale networked dynamical systems--such as power grids, transportation networks, and multiagent robotic systems--consist of many interacting subsystems whose physical couplings can both enable coordinated behavior and propagate disturbances. 
As a result, distributed feedback control is often essential for stability and performance, especially when centralized architectures are impractical due to sensing, communication, and computational constraints. 
A key challenge in such systems is the trade-off between closed-loop performance and communication requirements: richer information exchange can improve performance, but it also incurs costs in bandwidth, latency, reliability, and energy~\cite{ballotta2023can,ballotta2025role,shin2023near}. 
To this end, \emph{system level synthesis (SLS)} provides a natural framework for studying this trade-off by enabling scalable convex synthesis of distributed controllers directly in terms of closed-loop system responses subject to achievability and locality constraints~\cite{wang2014localized,anderson2019system}.

However, in many SLS-based designs, the communication structure is fixed a priori or aligned with the physical interconnection topology of the plant~\cite{khalil2025distributed,alonso2022distributed,anderson2019system,wang2016localized}.
Recent SLS-based work reduces communication usage under a given architecture \cite{yang2025safe,li2020separating}, but does not treat the communication topology itself as a design variable.
Modern cyber-physical systems, however, often support configurable or augmentable communication (e.g., software-defined routing, scheduled wireless links, or additional ``shortcut'' links),
raising the communication topology co-design problem: how should one \emph{jointly design a simple communication architecture and a distributed controller that respects information constraints while achieving good closed-loop performance}? 
Addressing this question requires (i) an explicit characterization of how a communication graph constrains the SLS system responses, (ii) a tractable formulation that avoids the combinatorial nature of graph selection, and (iii) a scalable solver that exploits the localized structure of the network.

\begin{figure}[t]
    \centering
    \includegraphics[width=1.0\linewidth]{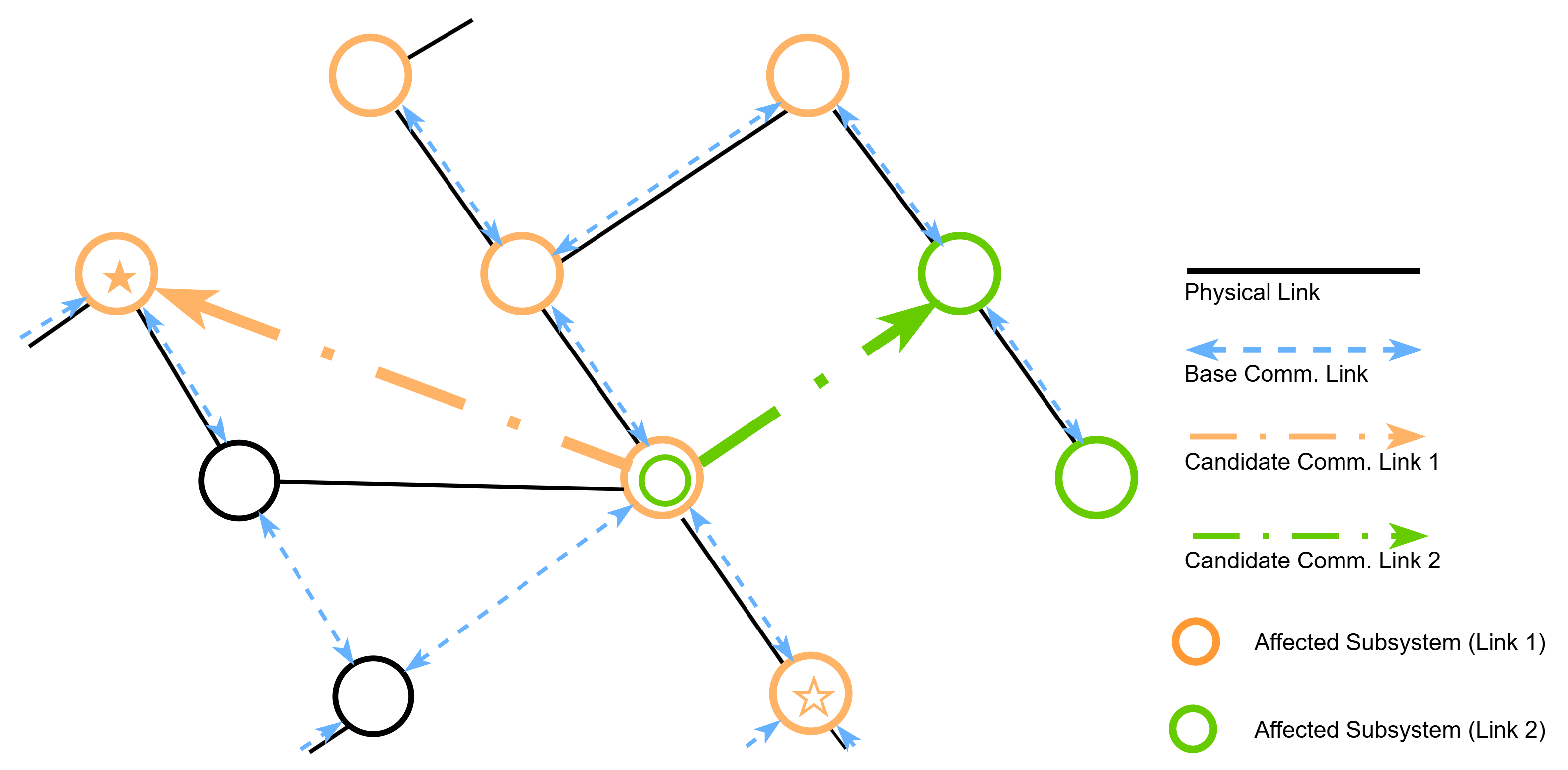}
    \caption{Schematic overview of the proposed 
    framework. Added Comm link 1 (orange line) reduces the path distance from the hollow star subsystem to the solid star subsystem from 4 to 3}
    \label{fig:schematic_overview}
\end{figure}

In this paper, we develop a convex co-design approach within the SLS framework. 
First, we decouple the physical topology from the communication topology and characterize a communication-induced feasible subspace of system responses parameterized by a binary adjacency matrix $\Gamma$ and a communication propagation speed $v_c$.
We show that a coverage condition relating plant interactions to communication reachability is necessary and sufficient for this characterization.
Second, following the regularization for design paradigm~\cite{matni2016regularization} and the atomic norm construction for communication architecture design~\cite{matni2015communication}, we relax the combinatorial topology selection problem into a convex regularized SLS program that promotes sparse use of additional communication links on top of a base graph.
Third, we solve the resulting problem using a proximal ADMM decomposition. This approach preserves separability of the SLS update and parallelizes the architecture update across candidate links by decoupling the quadratic penalty via a proximal term.
We demonstrate the proposed method on unstable networked topologies, and show that the proposed convex relaxation identifies high-value communication links more effectively than standard search baselines under comparable computational budgets.

The organization of this paper is as follows.
After introducing the necessary preliminaries in Section~\ref{sec:preliminaries}, 
Section~\ref{sec:problem formulation} explicitly characterizes the communication-induced system response subspace and establishes the convex regularized SLS formulation for communication architecture co-design problem.
To solve this in scalable, distributed, and localized manner, Section~\ref{sec:comm_cod_alg} develops an algorithm leveraging proximal ADMM. Section~\ref{sec:EXPERIMENTS} then provides numerical experiments to validate the performance and scalability of the proposed framework in optimizing the trade-off between closed-loop performance and communication architecture complexity. We conclude the paper in Section~\ref{sec:conclusion}.

\section{PRELIMINARIES} \label{sec:preliminaries}

\subsection{Interconnected System Model} \label{subsec:system_model}
In this paper, we consider distributed discrete-time LTI systems composed of $N_s$ subsystems, indexed by the node set $\mathscr{V} {\,\coloneqq\,} \{1, \dots ,N_s\}$. 
The system's interconnection is characterized by two directed graphs: the physical graph $\mathscr{G}_{\mathrm{phy}}{\,\coloneqq\,}(\mathscr{V}, \mathscr{E}_{\mathrm{phy}})$ and a communication graph $\mathscr{G}_{\mathrm{comm}}{\,\coloneqq\,}(\mathscr{V}, \mathscr{E}_{\mathrm{comm}})$. An edge $(i, j) \in \mathscr{E}_{\mathrm{phy}}$ indicates that the state of subsystem $j$ directly affects the state evolution of subsystem $i$, whereas an edge $(i, j) \in \mathscr{E}_{\mathrm{comm}}$ indicates that subsystem $j$ can directly transmit information to subsystem $i$.
We define the physical and communication adjacency matrices, $\bar{A}, \Gamma \in \{0,1\}^{N_s \times N_s}$, such that $\bar{A}_{ij}=1$ iff $(i,j) \in \mathscr{E}_{\mathrm{phy}}$, and $\Gamma_{ij}=1$ iff $(i,j) \in \mathscr{E}_{\mathrm{comm}}$. 
For an adjacency matrix $S \in \{\bar{A}, \Gamma\}$, we write $S^r$ for its Boolean $r$-th power, where $(S^r)_{ij}=1$ if and only if there exists a directed path of length at most $r$ from $j$ to $i$ on the respective graph. Equivalently, we denote this topological distance as $\mathrm{dist}_S(j \to i) \leq r$.

Let $\mathscr{N}_i = \{j \mid (i, j) \in \mathscr{E}_{\mathrm{phy}}\}$ denote the set of incoming neighbors for subsystem $i$.
The local dynamics of subsystem $i$ can then be written as:
\begin{equation}
    x_i[k+1] = A_{ii}x_i[k] + B_{ii}u_i[k] + \sum_{j \in \mathscr{N}_i} A_{ij}x_j[k] + w_i[k]
\end{equation}
where $x_i \in \mathbb{R}^n$, $u_i \in \mathbb{R}^m$, and $w_i \in \mathbb{R}^n$ denote the subsystem state, input, and exogenous disturbance, respectively. 
By aggregating the local dynamics, stacking all subsystems together yields the global model:
\begin{equation}
\label{eq:global_plant_model}
    x[k+1] = Ax[k] + B_2u[k] + w[k]
\end{equation}
where $x$, $u$, and $w$ are the stacked vectors of all local states, control actions, and disturbances, respectively. 

\subsection{System Level Synthesis} \label{subsec:system_level_synthesis}
We adopt the same notation conventions as used in standard SLS literature~\cite{wang2014localized,wang2016localized,anderson2019system}.
Under state feedback, SLS parametrizes the closed-loop responses from disturbance to state and control as
$\mathbf{x} = \mathbf{\Phi}_x \mathbf{w}, \mathbf{u} = \mathbf{\Phi}_u \mathbf{w}$.
According to the fundamental theorem of SLS \cite[Theorem 4.1]{anderson2019system}, these system responses are achievable by an internally stabilizing state-feedback controller if and only if
\begin{equation}
\label{eq:sls_achievability}
    \begin{bmatrix} zI - A & -B_2 \end{bmatrix}
    \begin{bmatrix} \mathbf{\Phi}_x \\ \mathbf{\Phi}_u \end{bmatrix}
    = I,
    \quad
    \mathbf{\Phi}_x,\mathbf{\Phi}_u \in \tfrac{1}{z}\mathcal{RH}_\infty
\end{equation}

One can additionally impose spatio-temporal constraints
\begin{equation}\label{eq:sls_const_dist_locality}
        \boldsymbol{\Phi}\coloneqq \begin{bmatrix} \mathbf{\Phi}_x \\ \mathbf{\Phi}_u \end{bmatrix} \in \mathcal{D} \cap \mathcal{L}_d \cap \mathcal{F}_T,
\end{equation}
where 
$\mathcal{D}$ encodes distributed information constraints (to be detailed later),
$\mathcal{L}_d$ encodes spatial locality over $d$-hop neighborhoods of the physical plant,
and $\mathcal{F}_T := \{ \mathbf{G}(z) \in \mathcal{RH}_{\infty} | \mathbf{G}(z)=\sum_{t=1}^T z^{-t}G[t] \}$ denotes a finite-impulse-response (FIR) constraint of horizon $T$.

If communication over a graph $\Gamma$ propagates with a constant speed $v_c{\,\geq\,}1$ (hops per time step) over links assumed ideal, i.e., lossless
and free of latency beyond propagation,
the maximum topological distance information can travel within $\tau$ steps is bounded by $\lfloor v_c \tau \rfloor$, and so $(\Phi_u)_{ij}[\tau+1]$ can be non-zero if and only if $\mathrm{dist}_{\Gamma}(j \to i) \le \lfloor v_c \tau \rfloor$. 
By letting $t{\,=\,}\tau + 1$, this communication delay constraint on control response $\mathbf{\Phi}_u$ translates into a sparsity constraint on the spectral components of $\mathbf{\Phi}_u$:
\begin{equation} \label{eq:comm_delay_constraint}
    \mathrm{supp}(\Phi_u[t]) \subseteq \Gamma^{\lfloor v_c(t-1) \rfloor} \otimes \mathbf{1}_{m\times n}, \quad \forall t \geq 1,
\end{equation}
where, $\mathbf{1}_{m\times n} \in \{0,1\}^{m\times n}$ is the all-ones matrix.
This explicit relation provides the foundational basis for formulating the communication topology-induced feasible subspace of system responses in Section \ref{subsec:feasible_subspace}.

\section{PROBLEM FORMULATION} \label{sec:problem formulation}

In this section, we formulate a convex regularized System Level Synthesis (SLS) problem to co-design the communication architecture.

\subsection{\texorpdfstring{$\Gamma$}{Gamma}-induced Feasible Subspace of System Responses}
\label{subsec:feasible_subspace}

Recall that we explicitly distinguish between the communication topology, denoted by a binary adjacency matrix $\Gamma$, and the physical topology, represented by $\bar{A}$. 
To formalize how the communication graph induces a feasible spatio-temporal constraint on the SLS system responses, we first define a compatibility condition between the physical and communication topologies.

\begin{definition}[\textbf{Coverage Condition}]
\label{def:comm_coverage_cond}
Given the physical topology support $\bar{A}$ and the communication speed $v_c \geq 1$,
a communication topology $\Gamma$ is said to satisfy the \textit{communication coverage condition} if $\bar{A} \subseteq \Gamma^{\lfloor v_c \rfloor}.$
\end{definition}
    
Definition~\ref{def:comm_coverage_cond} enforces that every direct physical dependency encoded by $\bar{A}$ can be communicated across $\Gamma$ within one plant time step.
We now show that this condition exactly characterizes the support structure
induced by $\Gamma$.


\begin{proposition}
\label{prop:induce_Phi_x_subspace_from_comm}
Given system~\eqref{eq:global_plant_model}, let $\Gamma$ be a communication adjacency matrix and $\bar{A}$ the physical topology support. Assume $v_c \ge 1$ and suppose $B_2$ has a block-diagonal structure, i.e., $\mathrm{supp}(B_2) \subseteq I_{N_s} \otimes \mathbf{1}_{n \times m}$. Let $\mathbf{\Phi}$ satisfy the achievability constraint~\eqref{eq:sls_achievability} and the communication delay constraint~\eqref{eq:comm_delay_constraint} on $\mathbf{\Phi}_u$.
Then $\Gamma$ satisfies the coverage condition in Definition~\ref{def:comm_coverage_cond} if and only if
\begin{equation}
\label{eq:phi_x_support}
\mathrm{supp}(\mathbf{\Phi}_x[t]) \subseteq \Gamma^{\lfloor v_c(t-1)\rfloor}
\otimes \mathbf{1}_{n\times n}, \quad \forall t \ge 1.
\end{equation}
\end{proposition}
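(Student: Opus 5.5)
The plan is to work directly with the spectral (impulse-response) form of the achievability constraint~\eqref{eq:sls_achievability}. Since $\mathbf{\Phi}_x,\mathbf{\Phi}_u\in\tfrac1z\mathcal{RH}_\infty$, matching coefficients of $z^{-t}$ in $(zI-A)\mathbf{\Phi}_x - B_2\mathbf{\Phi}_u = I$ gives
\begin{equation*}
\Phi_x[1]=I,\qquad \Phi_x[t+1]=A\,\Phi_x[t]+B_2\,\Phi_u[t],\quad t\ge 1 .
\end{equation*}
Both directions will be read off this recursion. Three bookkeeping facts about Boolean powers will be used throughout:
\begin{itemize}
\item $\Gamma^0=I$.
\item $\Gamma^r\subseteq\Gamma^{s}$ for $r\le s$, because the powers count paths of length \emph{at most} $r$.
\item $\Gamma^{r}\Gamma^{s}\subseteq\Gamma^{r+s}$, by path concatenation.
\end{itemize}
I also use that the support of a product of block matrices is contained in the Boolean product of the block supports. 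I take $\bar A$ to be exactly the block support of $A$, with the diagonal always included. Finally, since $B_2$ is block diagonal, $\mathrm{supp}(B_2 M)$ at block level is contained in $\mathrm{supp}(M)$.

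\emph{Sufficiency (coverage $\Rightarrow$ \eqref{eq:phi_x_support}).} I argue by induction on $t$. The base case is $\Phi_x[1]=I\subseteq\Gamma^0\otimes\mathbf 1_{n\times n}$.

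For the inductive step, consider the two terms of $\Phi_x[t+1]$ separately. For the first term,
\begin{equation*}
\mathrm{supp}(A\Phi_x[t]) \subseteq \bar A\,\Gamma^{\lfloor v_c(t-1)\rfloor} \subseteq \Gamma^{\lfloor v_c\rfloor}\Gamma^{\lfloor v_c(t-1)\rfloor} \subseteq \Gamma^{\lfloor v_c\rfloor+\lfloor v_c(t-1)\rfloor} \subseteq \Gamma^{\lfloor v_c t\rfloor}.
\end{equation*}
The second inclusion uses the coverage condition. The last inclusion uses superadditivity of the floor, $\lfloor a\rfloor+\lfloor b\rfloor\le\lfloor a+b\rfloor$, together with monotonicity of the powers.

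For the second term, block diagonality of $B_2$ and~\eqref{eq:comm_delay_constraint} give
\begin{equation*}
\mathrm{supp}(B_2\Phi_u[t])\subseteq\Gamma^{\lfloor v_c(t-1)\rfloor}\subseteq\Gamma^{\lfloor v_c t\rfloor}.
\end{equation*}
Combining the two terms (after tensoring with the appropriate all-ones blocks) closes the induction.

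\emph{Necessity (\eqref{eq:phi_x_support} $\Rightarrow$ coverage).} It suffices to look at $t=2$, where $\Phi_x[2]=A+B_2\Phi_u[1]$. By~\eqref{eq:comm_delay_constraint}, $\Phi_u[1]$ is supported on $\Gamma^0\otimes\mathbf 1=I\otimes\mathbf 1_{m\times n}$, i.e., it is block diagonal. Hence $B_2\Phi_u[1]$ is block diagonal as well, and for every $i\ne j$ the off-diagonal block satisfies $(\Phi_x[2])_{ij}=A_{ij}$. So if $\bar A_{ij}=1$ with $i\neq j$, this block is nonzero, and~\eqref{eq:phi_x_support} at $t=2$ forces $(\Gamma^{\lfloor v_c\rfloor})_{ij}=1$. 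Diagonal entries are covered automatically because $I\subseteq\Gamma^{\lfloor v_c\rfloor}$. Together these give $\bar A\subseteq\Gamma^{\lfloor v_c\rfloor}$. Note that this direction uses no cancellation argument, since nothing can cancel $A_{ij}$ off the block diagonal.

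\emph{Main obstacle.} I expect the main difficulty to lie in pinning down the quantifiers rather than in any computation.
\begin{itemize}
\item The ``if'' direction must be read as: \eqref{eq:phi_x_support} holds for the admissible $\mathbf\Phi$, and at least one admissible $\mathbf\Phi$ exists. Existence is needed for necessity, and the argument above applies to any such $\mathbf\Phi$.
\item The identification of $\bar A_{ij}=1$ with $A_{ij}\neq 0$ must be made explicit.
\end{itemize}
The floor superadditivity step is where the assumption $v_c\ge1$ and the ``at most $r$'' convention for powers are genuinely needed. I would state both points carefully before running the induction.
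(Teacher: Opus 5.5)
Your proposal is correct and follows the paper's proof essentially step for step: sufficiency by induction on $\Phi_x[t+1]=A\Phi_x[t]+B_2\Phi_u[t]$, using coverage and the block-diagonal structure of $B_2$, and necessity from the fact that the off-diagonal blocks of $\Phi_x[2]=A+B_2\Phi_u[1]$ equal $A_{ij}$ (the paper phrases this as a contrapositive and leaves implicit the floor superadditivity you make explicit). One small correction to your closing remark: $\lfloor a\rfloor+\lfloor b\rfloor\le\lfloor a+b\rfloor$ holds for all real $a,b$, so that step does not rely on $v_c\ge 1$ (only $v_c\ge 0$ is needed, so the exponents are nonnegative integers), whereas it does genuinely need monotonicity of the Boolean powers, i.e., the ``at most $r$'' convention.
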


\begin{proof}
($\Rightarrow$) Using the time-domain form of \eqref{eq:sls_achievability}, the SLS responses satisfy $\Phi_x[1]=I$ and $\Phi_x[t+1]=A\Phi_x[t]+B_2\Phi_u[t]$ for $t\geq 1$.
We proceed by induction.
For $t=1$, we get $\Phi_x[1] = I$,
and so $\mathrm{supp}(\Phi_x[1]) = \mathrm{supp}(I) \subseteq \Gamma^0 \otimes \mathbf{1}_{n\times n}$.
Now, assume the hypothesis holds for time $t\geq 1$.
Then the support of the state response at $t+1$ is bounded:
\begin{align*}
    & \mathrm{supp}(\Phi_x[t+1]) = \mathrm{supp}(A\Phi_x[t] + B_2 \Phi_u[t]) \\
    &\qquad\qquad \subseteq \mathrm{supp}(A)\mathrm{supp}(\Phi_x[t]) \cup \mathrm{supp}(B_2) \mathrm{supp}(\Phi_u[t])
\end{align*}
Recall that $B_2$ has a block-diagonal structure, i.e., $\mathrm{supp}(B_2)\subseteq I_{N_s} \otimes \mathbf{1}_{n \times m}$.
By applying~\eqref{eq:comm_delay_constraint} and  the mixed-product property of Kronecker product,
we have 
\begin{align*}
    \mathrm{supp}(B_2)\mathrm{supp}(\Phi_u[t])     
    & \subseteq (I_{N_s} \otimes \mathbf{1}_{n \times m}) (\Gamma^{\lfloor v_c(t-1) \rfloor} \otimes \mathbf{1}_{m\times n}) \\
    & = \Gamma^{\lfloor v_c(t-1) \rfloor} \otimes \mathbf{1}_{n\times n}
\end{align*}
Together with the coverage condition Definition~\ref{def:comm_coverage_cond},
we obtain:
\begin{align*}
    &\mathrm{supp}(\Phi_x[t+1]) \\
    &\subseteq \left((\bar{A} \;\Gamma^{\lfloor v_c(t-1) \rfloor})\otimes \mathbf{1}_{n \times n}\right) \cup \left(\Gamma^{\lfloor v_c(t-1) \rfloor} \otimes \mathbf{1}_{n\times n}\right) \\
    &\subseteq (\Gamma^{\lfloor v_c \rfloor}\,\Gamma^{\lfloor v_c(t-1) \rfloor} \cup \Gamma^{\lfloor v_c(t-1) \rfloor}) \otimes \mathbf{1}_{n\times n} \\
    &\subseteq \Gamma^{\lfloor v_c t \rfloor}\otimes \mathbf{1}_{n\times n}.
\end{align*}

($\Leftarrow$) We prove the contrapositive. Suppose $\bar{A} \not\subseteq \Gamma^{\lfloor v_c \rfloor}$, and pick $(i,j)$ with $\bar{A}_{ij} = 1$ and $(\Gamma^{\lfloor v_c \rfloor})_{ij} = 0$. Since $\Gamma^{r} \supseteq I_{N_s}$ for all $r \ge 0$, necessarily $i \neq j$. From $\mathbf{\Phi}_x[1] = I$ we obtain $\mathbf{\Phi}_x[2] = A + B_2\mathbf{\Phi}_u[1]$, while \eqref{eq:comm_delay_constraint} at $t=1$ gives $\mathrm{supp} (\mathbf{\Phi}_u[1]) \subseteq I_{N_s} \otimes \mathbf{1}_{m\times n}$, so $B_2\mathbf{\Phi}_u[1]$ is block diagonal.
Its $(i,j)$ block therefore vanishes, so the $(i,j)$ block of $\mathbf{\Phi}_x[2]$ equals $A_{ij}$, which is nonzero since $\bar{A}_{ij} = 1$,
so \eqref{eq:phi_x_support} fails at $t = 2$. \hfill\qedhere
\end{proof}


Proposition~\ref{prop:induce_Phi_x_subspace_from_comm} shows that the coverage
condition is exactly what is required for the communication topology $\Gamma$
to induce the following admissible subspace of system responses:
\begin{equation}
\label{subspace_dist}
        \mathcal{D}(\Gamma) := \bigoplus_{t \geq 1} \frac{1}{z^t}\mathcal{D}^{(t)}(\Gamma)
\end{equation}    
where each term is defined as:
\begin{align}
\label{subspace_dist_time}
    &\mathcal{D}^{(t)}(\Gamma) =\\& \left\{ \begin{bmatrix} \Phi_x[t] \\ \Phi_u[t] \end{bmatrix} \;\middle|\; 
    \begin{aligned}
    &\mathrm{supp}(\Phi_x[t]) \subseteq \Gamma^{\lfloor v_c(t-1) \rfloor}\otimes \mathbf{1}_{n \times n}, \\
    &\mathrm{supp}(B_2\Phi_u[t]) \subseteq \Gamma^{\lfloor v_c(t-1) \rfloor} \otimes \mathbf{1}_{n \times n}
    \end{aligned} \right\} \nonumber
\end{align}
Combining this $\Gamma$-induced constraint with locality, the spatio-temporal constraints~\eqref{eq:sls_const_dist_locality} can be expressed as

\begin{equation}
\label{eq:feasible_spatio_temporal_constraint}
    \mathcal{S}(\Gamma) := \mathcal{D}(\Gamma) \cap \mathcal{L}_d \cap \mathcal{F}_T 
\end{equation}

The full SLS problem considering communication topology satisfying the coverage condition can be written as follows, with~\eqref{eq:sls_const_dist_locality} replaced by~\eqref{eq:feasible_spatio_temporal_constraint}:
\begin{subequations}
\label{prob:sls_problem_with_Gamma}
\begin{align}
\min_{\mathbf{\Phi}} \quad 
    & \left\| \begin{bmatrix} C_1 & D_{12} \end{bmatrix}
    \mathbf{\Phi}
            \right\|_{\mathcal{H}_2}^2 
            \label{eq:sls_obj}\\
\text{s.t.}\quad 
    & \text{achievability }\eqref{eq:sls_achievability}\\
    & \mathbf{\Phi}
    \in \mathcal{S}(\Gamma) \label{eq:sls_feasible_const_with_Gamma}
\end{align}
\end{subequations}

\subsection{SLS Problem for Communication Graph Co-Design}

Building upon the feasible system response subspace characterization in Section \ref{subsec:feasible_subspace}, we now formulate the SLS problem for communication graph co-design by leveraging the \textit{regularization for design (RFD)} framework \cite{matni2016regularization,matni2015communication}. We consider a practical design scenario where the communication topology $\Gamma$ is constructed by augmenting a fixed, mandatory base topology $\Gamma_{\mathrm{base}}$ with a limited number of selected additional links $\mathscr{E}_{\text{sel}}$ from a set of candidate edges $\mathscr{E}$.
Here, the base topology $\Gamma_{\mathrm{base}}$ should satisfy Definition~\ref{def:comm_coverage_cond}; one practical choice is the physical topology $\bar{A}$.

To design a distributed controller that uses at most $l$ additional communication links, we impose a cardinality constraint on the augmented links.
From Problem \eqref{prob:sls_problem_with_Gamma}, the communication link constrained optimization control problem is formulated as follows
\begin{subequations}
\label{prob:sls_comm_codesign_combinatorial}
\begin{align}
\min_{\mathbf{\Phi},\Gamma}\quad
    & \left\|
            \begin{bmatrix} C_1 & D_{12} \end{bmatrix}
            \mathbf{\Phi}
       \right\|_{\mathcal{H}_2}^2
\label{eq:comb_obj}\\
\text{s.t.} \quad
    & \text{achievability }(\ref{eq:sls_achievability}) \\
    & \mathbf{\Phi}
    \in \mathcal{S}(\Gamma)
    \label{eq:sls_dist_localized_constraint} \\
    & \mathscr{E}_{\text{sel}} \subseteq \mathscr{E},\hskip.3cm 
    \mathrm{card}\!\left(\mathscr{E}_{\text{sel}}\right) \leq l \label{eq:comb_cardinality}
\end{align}
\end{subequations}
where $l \in \mathbb{Z}_{\geq 0}$ is the design budget.
Due to the cardinality constraint \eqref{eq:comb_cardinality}, Problem \eqref{prob:sls_comm_codesign_combinatorial} is a combinatorial problem which becomes computationally intractable, especially for large-scale systems.
To address this, we relax this constraint by removing the topology-dependent subspace $\mathcal{D}(\Gamma)$ in~\eqref{eq:feasible_spatio_temporal_constraint},
and introduce a convex penalty term $\|\mathbf{\Phi}_u\|_{\mathrm{comm}}$, to penalize the use of additional communication links.
The relaxed SLS problem with this regularization is formulated as follows:
\begin{subequations}
\label{prob:sls_comm_codesign_basic_structure}
\begin{align}
\min_{\mathbf{\Phi}} \quad
    & \left\| \begin{bmatrix} C_1 & D_{12} \end{bmatrix} \mathbf{\Phi} \right\|_{\mathcal{H}_2}^2
            \;+\; \lambda \|\mathbf{\Phi}_u\|_{\mathrm{comm}}  \label{eq:sls_comm_codesign_basic_structure_obj}\\
\text{s.t.} \quad
    & \text{achievability }(\ref{eq:sls_achievability})
    \label{eq:sls_comm_codesign_basic_structure_achievability}\\    
    & \mathbf{\Phi} \in \mathcal{L}_d \cap \mathcal{F}_T \label{eq:sls_comm_codesign_spatio_temporal_constraint}
\end{align}
\end{subequations}

Note that the solution to Problem \eqref{prob:sls_comm_codesign_basic_structure} should be realizable as a distributed controller.
To this end, we define distributed realizability as follows.

\begin{definition}[\textbf{Distributed Realizability}]
\label{def:dist_realizable}
Given an augmented communication topology $\Gamma$, a system response $\mathbf{\Phi}$ is said to be \textit{distributed realizable on $\Gamma$} if it satisfies the achievability constraint \eqref{eq:sls_achievability} and $\mathbf{\Phi} \in \mathcal{S}(\Gamma)$ as in~\eqref{eq:feasible_spatio_temporal_constraint}.
\end{definition}


The penalty in~\eqref{eq:sls_comm_codesign_basic_structure_obj} acts on $\mathbf{\Phi}_u$ alone, 
since by the achievability constraint~\eqref{eq:sls_achievability} the state response $\mathbf{\Phi}_x$ is determined by $\mathbf{\Phi}_u$. Penalizing $\mathbf{\Phi}_x$ would therefore be redundant, and would introduce a bias unrelated to the communication cost that distorts the link selection.
To characterize the feasible subspace of $\mathbf{\Phi}_u$, let $\mathcal{D}_u(\Gamma):=\mathcal{P}_u(\mathcal{D}(\Gamma))$ and $\mathcal{L}_{d,u}:=\mathcal{P}_u(\mathcal{L}_d)$, where $\mathcal{P}_u$ denotes the projection operator onto the control response space (i.e., $\mathbf{\Phi}_u = \mathcal{P}_u(\mathbf{\Phi})$). We define $\mathcal{S}_u(\Gamma) := \mathcal{D}_u(\Gamma) \cap \mathcal{L}_{d,u} \cap \mathcal{F}_T$ as the subspace of finite-horizon \emph{control} responses compatible with topology $\Gamma$, analogous to~\eqref{eq:feasible_spatio_temporal_constraint}. This restriction incurs no loss of generality, as formalized in the following remark.

\begin{remark}\label{remark:comm_link_norm}
    Let $\mathbf{\Phi}^\star$
    be the optimal solution to Problem \ref{prob:sls_comm_codesign_basic_structure}. Suppose the communication topology $\Gamma^\star$, induced by the optimal control response $\mathbf{\Phi}_u^\star \in \mathcal{S}_u(\Gamma^\star)$, satisfies the coverage condition in Definition \ref{def:comm_coverage_cond}. 
    Then, by Proposition \ref{prop:induce_Phi_x_subspace_from_comm}, the system response $\mathbf{\Phi}^\star$ is \textit{distributed realizable on $\Gamma^\star$} in the sense of Definition~\ref{def:dist_realizable}.
\end{remark}

\subsection{Communication-Link Regularization}
Having established the overall co-design objective in \eqref{eq:sls_comm_codesign_basic_structure_obj}, we now detail the explicit construction of the communication regularizer $\|\mathbf{\Phi}_u\|_{\mathrm{comm}}$.
We first define the atomic subspaces that capture the contribution of individual links.
To ensure that these atomic components inherently satisfy the spatio-temporal constraints of the control response, we confine the ambient space to $\mathcal{U} := \mathcal{L}_{d,u} \cap \mathcal{F}_T$.
Within this space, equipped with the Frobenius inner product, the atomic subspace associated with adding a candidate link $(i,j) \in \mathscr{E}$ to the base topology $\Gamma_{\mathrm{base}}$ is defined as:
\begin{equation}
    \label{eq:Eij_subspace}
    \mathcal{E}_{ij} := \mathcal{S}_u(\Gamma_{\mathrm{base}})^{\perp_{\mathcal{U}}} \cap \mathcal{S}_u(\Gamma_{\mathrm{base}} \cup E_{ij}),
\end{equation}
where $E_{ij} \in \{0, 1\}^{N_s \times N_s}$ is a binary matrix with its $(i, j)$-th entry equal to $1$ and $0$ elsewhere (representing the candidate directed link $(i,j)$), and $\perp_\mathcal{U}$ denotes the orthogonal complement within $\mathcal{U}$.
Geometrically, $\mathcal{E}_{ij}$ represents the component of the control-response subspace newly enabled by adding link $(i,j)$, orthogonal to what was already achievable with base topology $\Gamma_{\mathrm{base}}$.

With these atomic subspaces defined, we can now express the overall control response $\mathbf{\Phi}_u$ as a sum of a base component $\mathbf{\Phi}_{u,\mathrm{base}} \in \mathcal{S}_u(\Gamma_{\mathrm{base}})$ and individual atomic components $\mathbf{\Phi}_{u,ij} \in \mathcal{E}_{ij}$, i.e., 
\[
    \mathbf{\Phi}_u = \mathbf{\Phi}_{u,\mathrm{base}} + \sum_{(i,j) \in \mathscr{E}} \mathbf{\Phi}_{u,ij}.
\]
Based on this decomposition, we define the communication regularizer $\|\mathbf{\Phi}_u\|_{\mathrm{comm}}$ as the optimal value of the problem:
\begin{align}
\label{prob:comm_link_norm}
\|\mathbf{\Phi}_u\|_{\mathrm{comm}} \;:=\;
    \min_{\substack{\mathbf{\Phi}_{u,\mathrm{base}},\\ \{\mathbf{\Phi}_{u,ij}\}}} \quad
        & \sum_{(i,j)\in\mathscr{E}} \|\mathbf{\Phi}_{u,ij}\|_{\mathcal{H}_2} \\
    \text{s.t.}\quad
        & \mathbf{\Phi}_u = \mathbf{\Phi}_{u,\mathrm{base}} + \sum_{(i,j)\in\mathscr{E}} \mathbf{\Phi}_{u,ij}, \nonumber\\
        & \mathbf{\Phi}_{u,\mathrm{base}} \in \mathcal{S}_u(\Gamma_{\mathrm{base}}), \nonumber\\
        & \mathbf{\Phi}_{u,ij} \in \mathcal{E}_{ij}, \qquad \forall (i,j)\in\mathscr{E}. \nonumber
\end{align}
Essentially, one can interpret it as the atomic gauge induced by the base subspace and the single-link augmentation subspaces.
By substituting this definition into the relaxed SLS formulation \eqref{prob:sls_comm_codesign_basic_structure}, we obtain the final convex program for control and communication topology co-design:
\begin{subequations}
\label{prob:sls_comm_codesign}
\begin{align}
\min_{\substack{\mathbf{\Phi},\mathbf{\Phi}_{\mathrm{u,base}},\\ \{\mathbf{\Phi}_{u,ij}\}}} \quad
    & \left\| \begin{bmatrix} C_1 & D_{12} \end{bmatrix} \mathbf{\Phi} \right\|_{\mathcal{H}_2}^2
            + \lambda \sum_{(i,j)\in\mathscr{E}} \|\mathbf{\Phi}_{u,ij}\|_{\mathcal{H}_2} \label{eq:sls_comm_codesign_obj}\\
\text{s.t.} \quad
    & \begin{bmatrix} zI-A & -B_2 \end{bmatrix} \mathbf{\Phi} = I, \\
    & 
    \mathbf{\Phi} \in \mathcal{L}_d \cap \mathcal{F}_T \\
    & \mathbf{\Phi}_u = \mathbf{\Phi}_{u,\mathrm{base}} + \sum_{(i,j)\in\mathscr{E}}\mathbf{\Phi}_{u,ij}, \label{eq:sls_comm_codesign_decomp}\\
    & \mathbf{\Phi}_{u,\mathrm{base}} \in \mathcal{S}_u(\Gamma_{\mathrm{base}}), \nonumber \\
    &\mathbf{\Phi}_{u,ij}\in\mathcal{E}_{ij} \quad \forall(i,j)\in\mathscr{E} \label{eq:sls_comm_codesign_atom_subspaces}
\end{align}
\end{subequations}
The quadratic $\mathcal{H}_2$ performance objective~\eqref{eq:sls_comm_codesign_obj} features a $\ell_1/\ell_2$ regularization term. When $\lambda = 1$, this is equivalent to the group lasso penalty widely used in the statistical learning literature \cite{yuan2006model}. This regularizer induces block-sparsity in the decomposition of $\mathbf{\Phi}_u$, effectively driving entire atomic components $\mathbf{\Phi}_{u,ij}$ to zero unless the addition of the corresponding link $(i,j)$ yields a performance improvement that outweighs the communication penalty.
Consequently, the weighting parameter $\lambda{\,>\,}0$ explicitly governs the trade-off between the closed-loop performance and the complexity of the designed communication architecture.

To address the validity of the communication topology defined by solving Problem~\eqref{prob:sls_comm_codesign}, we first state the following intermediate result.

\begin{lemma}
\label{lem:monotone_Su}
If $\Gamma_1 \subseteq \Gamma_2$, then $\mathcal{S}_u(\Gamma_1)\subseteq \mathcal{S}_u(\Gamma_2)$.
\end{lemma}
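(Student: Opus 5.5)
The plan is to reduce the claim to monotonicity of Boolean powers of adjacency matrices, and then push that monotonicity through the definitions of $\mathcal{D}(\Gamma)$, the projection $\mathcal{P}_u$, and the intersection with the $\Gamma$-independent sets $\mathcal{L}_{d,u}$ and $\mathcal{F}_T$.

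\emph{Step 1: monotonicity of powers.} We first show that $\Gamma_1 \subseteq \Gamma_2$ (entrywise, as supports) implies $\Gamma_1^{r} \subseteq \Gamma_2^{r}$ for every $r \ge 0$. By the convention fixed in Section~\ref{sec:preliminaries}, $(\Gamma^r)_{ij}=1$ iff there is a directed path of length at most $r$ from $j$ to $i$. Every edge of $\mathscr{G}_1$ is an edge of $\mathscr{G}_2$, so any such path in $\mathscr{G}_1$ is also a path in $\mathscr{G}_2$, hence $\mathrm{dist}_{\Gamma_2}(j\to i)\le \mathrm{dist}_{\Gamma_1}(j\to i)$. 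The case $r=0$ gives the identity for both graphs. Kronecker products with all-ones blocks preserve this inclusion, so $\Gamma_1^{\lfloor v_c(t-1)\rfloor}\otimes \mathbf{1}_{p\times n} \subseteq \Gamma_2^{\lfloor v_c(t-1)\rfloor}\otimes \mathbf{1}_{p\times n}$ for all $t\ge1$ and all block sizes $p$.

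\emph{Step 2: monotonicity of $\mathcal{D}^{(t)}$ and $\mathcal{D}$.} Both defining conditions in \eqref{subspace_dist_time} are support inclusions into the set from Step~1. Enlarging that set therefore only relaxes the constraints, which gives $\mathcal{D}^{(t)}(\Gamma_1)\subseteq \mathcal{D}^{(t)}(\Gamma_2)$ for each $t$. Taking the direct sum over $t$ yields $\mathcal{D}(\Gamma_1)\subseteq\mathcal{D}(\Gamma_2)$.

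\emph{Step 3: projection and intersection.} The image of a set under a map is monotone in the set, so $\mathcal{D}_u(\Gamma_1)=\mathcal{P}_u(\mathcal{D}(\Gamma_1))\subseteq \mathcal{P}_u(\mathcal{D}(\Gamma_2))=\mathcal{D}_u(\Gamma_2)$. The sets $\mathcal{L}_{d,u}$ and $\mathcal{F}_T$ do not depend on $\Gamma$, and intersecting with them preserves inclusion. Hence $\mathcal{S}_u(\Gamma_1)\subseteq\mathcal{S}_u(\Gamma_2)$.

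The only step needing care is Step~1. The Boolean power here is the ``at most $r$ hops'' reachability, not an exact-length walk count. With that convention, monotonicity in the edge set is immediate, and no reflexivity assumption on $\Gamma$ is required.
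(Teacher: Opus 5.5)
Your proof is correct and follows the same route as the paper's sketch. Both arguments use monotonicity of the Boolean powers $\Gamma^{\lfloor v_c(t-1)\rfloor}$ in the edge set to get $\mathcal{D}_u(\Gamma_1)\subseteq\mathcal{D}_u(\Gamma_2)$, then intersect with the $\Gamma$-independent set $\mathcal{L}_{d,u}\cap\mathcal{F}_T$; you additionally spell out the details the paper leaves implicit (the path argument for the powers, the passage through each $\mathcal{D}^{(t)}$, and monotonicity of images under $\mathcal{P}_u$).
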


The proof of this is simple, so we leave it as a sketch.
The inclusion $\Gamma_1\subseteq \Gamma_2$ implies $\Gamma_1^{\lfloor v_c(t-1)\rfloor}\subseteq \Gamma_2^{\lfloor v_c(t-1)\rfloor}$ for all $t\ge 1$.
Hence $\mathcal{D}_u(\Gamma_1)\subseteq \mathcal{D}_u(\Gamma_2)$, and intersecting both sides with the fixed subspace $\mathcal{L}_{d,u}\cap \mathcal{F}_T$ yields the claim.
Lemma~\ref{lem:monotone_Su} intuitively says that adding communication links enlarges the feasible subspace, ensuring that the optimal $\mathcal{H}_2$ cost is monotonically non-increasing.

\begin{proposition} \label{prop:sls_dist_realizable}
Assume base topology $\Gamma_{\mathrm{base}}$ satisfies the coverage condition in Definition~\ref{def:comm_coverage_cond}, and let $\mathbf{\Phi}^*$ be the optimal solution to Problem \ref{prob:sls_comm_codesign} with corresponding optimally selected link set $\mathscr{E}_{\mathrm{sel}}^* \coloneqq \{(i ,j) \in \mathscr{E} \mid \mathbf{\Phi}_{u,ij}^* \neq 0\}$ and adjacency matrix $\Gamma_{\mathrm{des}} \coloneqq \Gamma_{\mathrm{base}} \cup (\cup_{(i,j)\in\mathscr{E}_{\mathrm{sel}}^*} E_{ij})$.
Then $\mathbf{\Phi}^*$ is distributed realizable on $\Gamma_{\mathrm{des}}$
in the sense of Definition~\ref{def:dist_realizable}.
\end{proposition}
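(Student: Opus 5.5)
The plan is to show two things: that $\mathbf{\Phi}_u^*$ lies in $\mathcal{S}_u(\Gamma_{\mathrm{des}})$, and that $\Gamma_{\mathrm{des}}$ satisfies the coverage condition. Proposition~\ref{prop:induce_Phi_x_subspace_from_comm} then supplies the support pattern of $\mathbf{\Phi}_x^*$, in the spirit of Remark~\ref{remark:comm_link_norm}. Achievability~\eqref{eq:sls_achievability} and $\mathbf{\Phi}^*\in\mathcal{L}_d\cap\mathcal{F}_T$ hold directly as constraints of Problem~\eqref{prob:sls_comm_codesign}, so only the $\mathcal{D}(\Gamma_{\mathrm{des}})$ membership must be established.

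\textbf{Step 1 (control response).} Use the decomposition constraint~\eqref{eq:sls_comm_codesign_decomp} together with the definition of $\mathscr{E}_{\mathrm{sel}}^*$ to write
\[
\mathbf{\Phi}_u^*=\mathbf{\Phi}_{u,\mathrm{base}}^*+\sum_{(i,j)\in\mathscr{E}_{\mathrm{sel}}^*}\mathbf{\Phi}_{u,ij}^*.
\]
The dropped terms are exactly zero by definition of $\mathscr{E}_{\mathrm{sel}}^*$.
\begin{itemize}
\item Since $\Gamma_{\mathrm{base}}\subseteq\Gamma_{\mathrm{des}}$, Lemma~\ref{lem:monotone_Su} gives $\mathbf{\Phi}_{u,\mathrm{base}}^*\in\mathcal{S}_u(\Gamma_{\mathrm{des}})$.
\item For each selected link, \eqref{eq:Eij_subspace} gives $\mathbf{\Phi}_{u,ij}^*\in\mathcal{E}_{ij}\subseteq\mathcal{S}_u(\Gamma_{\mathrm{base}}\cup E_{ij})$. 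Since $\Gamma_{\mathrm{base}}\cup E_{ij}\subseteq\Gamma_{\mathrm{des}}$, Lemma~\ref{lem:monotone_Su} again gives membership in $\mathcal{S}_u(\Gamma_{\mathrm{des}})$.
\item $\mathcal{S}_u(\Gamma_{\mathrm{des}})$ is a linear subspace, being an intersection of support-constraint subspaces. It is therefore closed under finite sums, so $\mathbf{\Phi}_u^*\in\mathcal{S}_u(\Gamma_{\mathrm{des}})$.
\end{itemize}
Unpacking $\mathcal{D}_u$ from~\eqref{subspace_dist_time}, this means $\mathrm{supp}(B_2\Phi_u^*[t])\subseteq\Gamma_{\mathrm{des}}^{\lfloor v_c(t-1)\rfloor}\otimes\mathbf{1}_{n\times n}$ for each $t$.

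\textbf{Step 2 (state response).} $\Gamma_{\mathrm{des}}$ satisfies the coverage condition because $\bar{A}\subseteq\Gamma_{\mathrm{base}}^{\lfloor v_c\rfloor}\subseteq\Gamma_{\mathrm{des}}^{\lfloor v_c\rfloor}$, by monotonicity of Boolean powers. Since $\Phi_x^*[1]=I$ and $\Phi_x^*[t+1]=A\Phi_x^*[t]+B_2\Phi_u^*[t]$, the support of $\mathbf{\Phi}_x^*$ follows by the same induction as in the ($\Rightarrow$) direction of Proposition~\ref{prop:induce_Phi_x_subspace_from_comm}. That induction uses only the coverage condition and the support bound on $B_2\Phi_u[t]$. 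Hence $\mathrm{supp}(\Phi_x^*[t])\subseteq\Gamma_{\mathrm{des}}^{\lfloor v_c(t-1)\rfloor}\otimes\mathbf{1}_{n\times n}$, and combining with Step~1 gives $\mathbf{\Phi}^*\in\mathcal{D}(\Gamma_{\mathrm{des}})$, and thus $\mathbf{\Phi}^*\in\mathcal{S}(\Gamma_{\mathrm{des}})$.

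\textbf{Main obstacle.} The delicate point is the mismatch between the constraint on $\Phi_u$ in $\mathcal{D}^{(t)}$, which is stated through $B_2\Phi_u[t]$, and the form~\eqref{eq:comm_delay_constraint} assumed in Proposition~\ref{prop:induce_Phi_x_subspace_from_comm}. I would avoid invoking the proposition verbatim and instead rerun its short induction with the $B_2\Phi_u$ bound directly. If $B_2$ is block diagonal with full column rank blocks, the two forms coincide and the proposition applies as stated.
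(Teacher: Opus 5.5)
Your proposal is correct and follows the paper's proof essentially step for step. You restrict the decomposition to the selected links, lift each piece into $\mathcal{S}_u(\Gamma_{\mathrm{des}})$ via Lemma~\ref{lem:monotone_Su}, use closure of the subspace under sums, inherit the coverage condition from $\Gamma_{\mathrm{base}}$, and obtain the support of $\mathbf{\Phi}_x^*$ from the induction of Proposition~\ref{prop:induce_Phi_x_subspace_from_comm}.

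Your one deviation is to rerun that induction using the bound on $\mathrm{supp}(B_2\Phi_u^*[t])$ rather than citing the proposition verbatim. This is a legitimate tightening: membership in $\mathcal{S}_u(\Gamma_{\mathrm{des}})$ only controls $B_2\Phi_u$, whereas the paper invokes Proposition~\ref{prop:induce_Phi_x_subspace_from_comm}, whose hypothesis~\eqref{eq:comm_delay_constraint} constrains $\Phi_u$ itself. The two conditions agree when the diagonal blocks of $B_2$ have full column rank, as in the experiments where $B_2=I$.
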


\begin{proof}
Since $\mathbf{\Phi}^*$ satisfies decomposition constraints \eqref{eq:sls_comm_codesign_decomp} and \eqref{eq:sls_comm_codesign_atom_subspaces}, we have $\mathbf{\Phi}_u^* = \mathbf{\Phi}_{u,\mathrm{base}}^* + \sum_{(i,j)\in\mathscr{E}_{\mathrm{sel}}^*} \mathbf{\Phi}_{u,ij}^*$, so $\mathbf{\Phi}_{u,\mathrm{base}}^*\in \mathcal{S}_u(\Gamma_{\mathrm{base}})$ and $\mathbf{\Phi}_{u,ij}^*\in \mathcal{S}_u(\Gamma_{\mathrm{base}}\cup E_{ij})$.
Here, we add superscript $*$ to denote the variables in Problem~\ref{prob:sls_comm_codesign} corresponding to $\mathbf{\Phi}^*$.
Since $\Gamma_{\mathrm{base}}\subseteq \Gamma_{\text{des}}$ and
$\Gamma_{\mathrm{base}}\cup E_{ij}\subseteq \Gamma_{\text{des}}$ for every $(i,j)\in\mathscr{E}_{\mathrm{sel}}^*$, Lemma~\ref{lem:monotone_Su} gives $\mathcal{S}_u(\Gamma_{\mathrm{base}})\subseteq \mathcal{S}_u(\Gamma_{\text{des}})$,and $\mathcal{S}_u(\Gamma_{\mathrm{base}}\cup E_{ij})\subseteq \mathcal{S}_u(\Gamma_{\text{des}})$.
Because $\mathcal{S}_u(\Gamma_{\text{des}})$ is a linear subspace, it follows that $\mathbf{\Phi}_u^* \in \mathcal{S}_u(\Gamma_{\text{des}})$.
Finally, $\Gamma_{\text{des}}$ contains the covering graph $\Gamma_{\mathrm{base}}$, so it also satisfies the coverage condition. 
Proposition~\ref{prop:induce_Phi_x_subspace_from_comm} implies $\mathbf{\Phi}^* \in \mathcal{S}(\Gamma_{\text{des}})$, and therefore $\mathbf{\Phi}^*$ is distributed realizable on $\Gamma_{\text{des}}$.
\end{proof}

\section{COMMUNICATION ARCHITECTURE CO-DESIGN ALGORITHM} \label{sec:comm_cod_alg}

We now describe a scalable solver for Problem~\eqref{prob:sls_comm_codesign}.
The closed-loop $\mathcal{H}_2$ objective is separable across block-columns under the locality/FIR constraints, while the communication regularizer is separable across candidate links but coupled through the decomposition constraint.
To exploit this structure, we apply proximal ADMM~\cite{fazel2013hankel} to split the controller-synthesis variables from the architecture (link-selection) variables, and we parallelize the architecture update by adding an  appropriate proximal term.

\subsection{Proximal ADMM Splitting and Parallel Updates} \label{subsec:admm_split_and_parallel_updates}
We introduce the following proximal-ADMM consensus splitting of Problem~\ref{prob:sls_comm_codesign}:
\begin{subequations}
\label{prob:admm_sls_comm_codesign}
\begin{align}
\min_{\mathbf{\Phi}_x,\;\mathbf{\Phi}_{u1},\;\mathbf{\Phi}_{u2}} \quad
    & f(\mathbf{\Phi}_x, \mathbf{\Phi}_{u1}) \; + \; g(\mathbf{\Phi}_{u2}) \label{eq:admm_sls_comm_codesign_obj}\\
\text{s.t.}\quad
    & \mathbf{\Phi}_{u1} = \mathbf{\Phi}_{u2} \label{eq:admm_sls_comm_codesign_constraint}
\end{align}
\end{subequations}
where
\begin{align}
    &f(\mathbf{\Phi}_x, \mathbf{\Phi}_{u1}) :=\\
    &
        \begin{cases}
        \left\| \begin{bmatrix} C_1 \; D_{12} \end{bmatrix}\begin{bmatrix} \mathbf{\Phi}_x \\ \mathbf{\Phi}_{u1} \end{bmatrix} \right\|_{\mathcal{H}_2}^2 
        & \begin{aligned} 
            &\text{if \eqref{eq:sls_achievability} and} 
            \begin{bmatrix}\mathbf{\Phi}_x\\\mathbf{\Phi}_{u1}\end{bmatrix}\in\mathcal{S}(\Gamma_{\max})
          \end{aligned} \\        
        \infty, &\text{otherwise}
        \end{cases}\notag
\end{align}
and $g(\mathbf{\Phi}_{u2}) := \lambda \|\mathbf{\Phi}_{u2}\|_{\mathrm{comm}}$. Here $\Gamma_{\max} \coloneqq \Gamma_{\mathrm{base}} \cup (\cup_{(i,j)\in \mathscr{E}}E_{ij})$ denotes the communication adjacency matrix that includes all candidate links.
To apply ADMM, we introduce an auxiliary variable $\mathbf{\Phi}_{u2}$ and assign the atomic norm regularizer to this variable, while retaining the SLS synthesis constraints on the original system responses.
The equivalence to the original problem is preserved by imposing the linear consensus constraint $\mathbf{\Phi}_{u1}=\mathbf{\Phi}_{u2}$, yielding a variable-splitting constrained formulation.
By \eqref{prob:comm_link_norm}, $g$ is evaluated through $\mathbf{\Phi}_{u2,\mathrm{base}}$ and $\{\mathbf{\Phi}_{u2,ij}\}_{(i,j)\in\mathscr{E}}$.

Our new Problem~\eqref{prob:admm_sls_comm_codesign} is of the standard ADMM form \cite{boyd2011distributed}. By applying proximal ADMM,
it follows the iterations:
\begin{subequations}
\begin{align}
    (\mathbf{\Phi}_{x}^{k+1}, \mathbf{\Phi}_{u1}^{k+1}) &= \arg\min_{\mathbf{\Phi}_x, \mathbf{\Phi}_{u1}} \quad f(\mathbf{\Phi}_x, \mathbf{\Phi}_{u1}) \notag \\
    &\qquad + \frac{\rho}{2} \| \mathbf{\Phi}_{u1} - \mathbf{\Phi}_{u2}^k + \mathbf{\Lambda}^k\|_{\mathcal{H}_2}^2 \label{eq:admm_step1_sls_update} \\[1em]
    \mathbf{\Phi}_{u2}^{k+1} &= \arg\min_{\mathbf{\Phi}_{u2}} \quad g(\mathbf{\Phi}_{u2}) \notag \\
    &\qquad + \frac{\rho}{2} \| \mathbf{\Phi}_{u2} - \mathbf{\Phi}_{u1}^{k+1} - \mathbf{\Lambda}^k\|_{\mathcal{H}_2}^2 \notag \\
    &\qquad + \frac{1}{2} \| \mathbf{\Phi}_{u2} - \mathbf{\Phi}_{u2}^k\|_{P}^2 \label{eq:admm_step2_architecture_update} \\[1em]
    \mathbf{\Lambda}^{k+1} &= \mathbf{\Lambda}^k + \mathbf{\Phi}_{u1}^{k+1} - \mathbf{\Phi}_{u2}^{k+1}. \label{eq:admm_step3_dual_update}
\end{align}
\end{subequations}

Subproblem \eqref{eq:admm_step1_sls_update} has the same structure as the localized SLS synthesis problem, with an additional $\mathcal{H}_2$-quadratic term.
As in \cite{anderson2019system}, it admits a parallel solution across block-columns and can exploit locality-based dimension reduction.
Moreover, since the dual update equation \eqref{eq:admm_step3_dual_update} consists solely of matrix addition, it can be computed locally by each subsystem.

The architecture update~\eqref{eq:admm_step2_architecture_update} can be efficiently solved in parallel with an appropriate proximal term, $\|\cdot\|_P^2$, which guarantees globally-optimal convergence.
To define this proximal term, we first note the following structure. The subspace of the base component $\mathbf{\Phi}_{u2,\mathrm{base}}$ and the subspace of each atomic component $\mathbf{\Phi}_{u2,ij}$ are orthogonal by construction. 
Furthermore, because $\mathbf{\Phi}_{u2,\mathrm{base}}$ is not penalized by the communication norm, its update is simply given by the orthogonal projection onto the base subspace: $\mathbf{\Phi}_{u2, \mathrm{base}}^{k+1}= \mathcal{P}_{\mathrm{base}}(\mathbf{\Phi}_{u1}^{k+1}+\mathbf{\Lambda}^k)$,
where $\mathcal{P}_{\mathrm{base}}(\cdot)$ represents projection operator onto $\mathcal{S}_u(\Gamma_{\mathrm{base}})$.
On the other hand, the atomic components are generally not orthogonal to each other, i.e., for distinct candidate links $(i,j), (p,q) \in \mathscr{E}$, the intersection $\mathcal{E}_{ij} \cap \mathcal{E}_{pq}$ (see~\eqref{eq:Eij_subspace}) may be non-trivial.
Thus,
the quadratic penalty term in \eqref{eq:admm_step2_architecture_update} couples them.
By adding the appropriate proximal term, we can remove this coupling and compute the update in a parallelized manner per link.

First, let $z{\,\coloneqq\,}[z_{ij}]_{(i,j) \in \mathscr{E}}$ where $z_{ij}{\,\in\,}\mathbb{R}^{|\mathscr{E}_{ij}|}$ is the flattened vector of $\mathbf{\Phi}_{u2,ij} \in \mathcal{E}_{ij}$.
By setting $P \coloneqq (\rho + c)I - \rho MM^{\top}$, we can rewrite \eqref{eq:admm_step2_architecture_update} as follows:
\begin{align}
\label{eq:vec_prox_arch_update}
    z^{k+1} &= \arg\min_{z} \quad 
         \lambda \sum_{(i,j)\in \mathscr{E}} \|z_{ij}\| \\
        &\hskip.5cm + \frac{\rho + c}{2} \| z - \left(z^k - \frac{\rho}{\rho+c} M (M^\top z^k - V^{k+1})\right) \|^2 \nonumber    
\end{align}
where $V^{k+1}$ is the vectorized form of $\mathbf{\Phi}_{u1}^{k+1} + \mathbf{\Lambda}^k$ with appropriate dimensions, and $M = [M_{ij}]_{(i,j)\in\mathscr{E}} \in \mathbb{R}^{\left(\sum_{(i,j)\in\mathscr{E}} |\mathcal{E}_{ij}|\right)
\times |\mathcal{S}_{u}(\Gamma_{\max})|}$.
Here, each $M_{ij}$ is a selection matrix which maps the global subspace $\mathcal{S}_{u}(\Gamma_{\max})$ to each element of $\mathcal{E}_{ij}$.

With the above choice of proximal term, the architecture update admits a separable form over candidate links, yielding the parallel subproblems
\begin{subequations}
\label{eq:prox_subproblem2_decomp}
\begin{align}
    \mathbf{\Phi}_{u2,ij}^{k+1} = \arg&\min_{\mathbf{\Phi}_{u2,ij}} \quad 
         \lambda\|\mathbf{\Phi}_{u2,ij}\|_{\mathcal{H}_2} \nonumber \\
        &\quad + \frac{\rho}{2} \| \mathbf{\Phi}_{u2,ij}+\mathbf{\Phi}_{u2,-ij}^{k} - \mathbf{\Phi}_{u1}^{k+1} - \mathbf{\Lambda}^k\|_{\mathcal{H}_2}^2  \nonumber \\
        &\quad + \frac{c}{2}\| \mathbf{\Phi}_{u2,ij}-\mathbf{\Phi}_{u2,ij}^{k}\|_{\mathcal{H}_2}^2 \\[0.5em]
    & \quad \text{s.t.} \quad
         \mathbf{\Phi}_{u2,ij} \in \mathcal{E}_{ij},
\end{align}
\end{subequations}
where
\[
\mathbf{\Phi}_{u2,-ij}^{k} \triangleq \sum_{(p,q)\in\mathscr{I}_{ij}} \mathbf{\Phi}_{u2,pq}^{k}.
\]

Here, $\mathscr{I}_{ij}$ denotes the coupling link set for link $(i,j)$, which is defined as the set of other candidate links whose atomic subspaces intersect with $\mathcal{E}_{ij}$: 
\[
\mathscr{I}_{ij} := \{ (p,q) \in \mathscr{E}\setminus\{(i,j)\}\mid \mathcal{E}_{ij}\cap\mathcal{E}_{pq}\neq\{\mathbf{0}\}\}.
\]

To satisfy the proximal ADMM convergence conditions established in \cite{fazel2013hankel}, we choose $c$ such that the proximal matrix satisfies $P \succeq0$, which requires $\rho+c \geq \rho\lambda_{\max}( M^\top M)$. 
Notice that $M^\top M = \sum_{(i,j)\in\mathscr{E}} M_{ij}^\top M_{ij}$. Since each $M_{ij}$ is a selection matrix, $M^\top M$ is a diagonal matrix where each diagonal entry simply counts the number of atomic subspaces that contain the corresponding global basis element. Because any atomic subspace overlaps with at most $|\mathscr{I}_{ij}|$ other candidate links, no single global dimension is shared by more than $\max_{(i,j)} |\mathscr{I}_{ij}| + 1$ subspaces.
Thus, a sufficient condition for $P \succeq 0$ is 
$c \geq \rho \cdot \max_{(i,j)} |\mathscr{I}_{ij}|$.
Furthermore, $\rho \ge 1$ is sufficient for convergence with a unit dual step size.

\subsection{Closed-form Solution for the ADMM Subproblems}
Adopting the notation of \cite{wang2016localized}, we show that the proposed ADMM subproblems admit closed-form solutions.
Since Subproblem \eqref{eq:admm_step1_sls_update}, dimensionally reduced by the $\mathcal{L}_d$ constraint, is a least-squares problem subject to affine constraints, the optimal analytical solution for the $j$-th block-column over its localized neighborhood $o_{j,d}$ can be specified as an affine function:
\begin{equation} \label{eq:closed_form_sls}
    \begin{bmatrix} \mathbf{\Phi}_x^{k+1} \\ \mathbf{\Phi}_{u1}^{k+1} \end{bmatrix}_{o_{j,d}} = F_{o_{j,d}}^a(\mathbf{\Phi}_{u2}^k, \Lambda^k)_{o_{j,d}} + F_{o_{j,d}}^b
\end{equation}
where $F_{o_{j,d}}^a$ and $F_{o_{j,d}}^b$ represent a suitable linear map and an affine term, respectively.
These parameters can be computed by using the vectorization property of the Frobenius norm.

For the per-link architecture update~\eqref{eq:prox_subproblem2_decomp}, let $\mathcal{P}_{\mathcal{E}_{ij}}$ denote the orthogonal projection onto $\mathcal{E}_{ij}$.
By completing the square in the objective function of \eqref{eq:prox_subproblem2_decomp}, the $\ell_2$-norm penalty acts as a group-level regularizer. This shows that the per-link architecture update reduces to a group soft-thresholding (vector shrinkage) step \cite{wright2009sparse}:
\begin{equation}
\label{eq:closed_form_arch}
\mathbf{\Phi}_{u2,ij}^{k+1} =
\left(1-\frac{\lambda/(\rho+c)}{\left\|\mathcal{P}_{\mathcal{E}_{ij}}\!\left(V_{ij}^{k}\right)\right\|_{\mathcal{H}_2}}\right)_{+}\,
\mathcal{P}_{\mathcal{E}_{ij}}\!\left(V_{ij}^{k}\right),
\end{equation}
where $(a)_{+}=\max\{a,0\}$ and
\[
V_{ij}^{k} := \frac{\rho}{\rho+c}\left(\mathbf{\Phi}_{u1}^{k+1}+\Lambda^k-\mathbf{\Phi}_{u2,-ij}^{k}\right) + \frac{c}{\rho+c}\mathbf{\Phi}_{u2,ij}^{k} .
\]

By admitting exact closed-form updates, the algorithm avoids computationally expensive numerical solvers within the ADMM loop. Since the affine parameters in \eqref{eq:closed_form_sls} are computed only once prior to the iterations, the subsequent updates reduce to highly efficient matrix multiplications and soft-thresholding steps.

The full co-design procedure is summarized in Algorithm~\ref{alg:comm-codesign}.
First, from Proposition~\ref{prop:sls_dist_realizable}, we can extract desired communication topology $\Gamma_{\mathrm{des}}$ from the optimal solution to Problem~\ref{prob:sls_comm_codesign}, guaranteeing that the system response is realizable as a distributed controller. Afterwards, we can obtain distributed optimal controller $\mathbf{\Phi}_{\mathrm{des}}^*$ by solving the unregularized Problem \ref{prob:sls_problem_with_Gamma} over $\Gamma_{\mathrm{des}}$ which eliminates any bias and conservatism introduced by the atomic norm regularizer. 
Following \cite{boyd2011distributed}, the algorithm terminates when the primal and dual residuals fall below their respective tolerances.

\subsection{Scalable and Localized Synthesis} 
\label{subsec:dist_impl}

To guarantee the overall scalability of the proposed ADMM algorithm, it is sufficient to analyze the computational complexity of the per-link architecture update~\eqref{eq:prox_subproblem2_decomp}. This is because Subproblem \eqref{eq:admm_step1_sls_update} inherits the scalability of the standard localized SLS, and the dual variable update consists merely of local matrix additions.
The computational complexity of the per-link architecture update~\eqref{eq:prox_subproblem2_decomp} depends on two local metrics: the dimension of the local atomic subspace $\dim(\mathcal{E}_{ij})$ and the cardinality of the coupling set $|\mathscr{I}_{ij}|$. 

\begin{algorithm}[t]
\caption{Communication Architecture Co-Design}
\label{alg:comm-codesign}
\begin{algorithmic}[1]
    \small
    \Require System matrices $(A,B_2,C_1,D_{12})$, base topology $\Gamma_{\mathrm{base}}$, candidate edges $\mathscr{E}$, regularization weight $\lambda$, proximal ADMM parameters $(\rho, c)$, and tolerances $(\epsilon^{\mathrm{pri}}, \epsilon^{\mathrm{dual}})$.
    \Ensure Designed communication adjacency matrix $\Gamma_{\mathrm{des}}$, Optimal system response $\mathbf{\Phi}_{\mathrm{des}}^*$.
    \Statex
    \State \textbf{Initialize:} $\Gamma_{\mathrm{des}} \gets \Gamma_{\mathrm{base}}$, $\mathbf{\Phi}_{u2}^0$, $\Lambda^0$, iteration counter $k=0$.
    \Statex
    \State \textbf{Solve the regularized co-design \eqref{prob:sls_comm_codesign}:} 
    \Repeat        
        \State Update $(\mathbf{\Phi}_x^{k+1},\mathbf{\Phi}_{u1}^{k+1})$ in parallel using \eqref{eq:closed_form_sls} 
        \State Update $\mathbf{\Phi}_{u2}^{k+1}$ in parallel by computing $\mathbf{\Phi}_{u2,ij}^{k+1}$ using \eqref{eq:closed_form_arch} 
        \Statex \qquad \qquad \; and $\mathbf{\Phi}_{u2, \mathrm{base}}^{k+1} = \mathcal{P}_{\mathrm{base}}(\mathbf{\Phi}_{u1}^{k+1}+\Lambda^k)$.
        \State Update dual variable $\Lambda^{k+1}$ using \eqref{eq:admm_step3_dual_update}.
        \State $k \gets k+1$
    \Until{convergence tolerances $\epsilon^{\mathrm{pri}}, \epsilon^{\mathrm{dual}}$ are met}
    \Statex
    \State \textbf{Construct $\mathscr{E}_{\text{sel}}$ and corresponding
    communication graph:
    }
    \ForAll{$(i,j) \in \mathscr{E}$ s.t. $\|\mathbf{\Phi}_{u2,ij}^k\|_{\mathcal{H}_2} \neq 0$}
        \State $\Gamma_{\mathrm{des}} \gets \Gamma_{\mathrm{des}} \cup E_{ij}$
    \EndFor
    \Statex
    \State \textbf{Polish optimal system response on communication graph:}
    \State Solve \eqref{prob:sls_problem_with_Gamma} with $\Gamma=\Gamma_{\mathrm{des}}$ (no regularizer) to obtain $\mathbf{\Phi}_{\mathrm{des}}^*$.
    \Statex
    \State \Return $\Gamma_{\mathrm{des}},\ \mathbf{\Phi}_{\mathrm{des}}^*$
\end{algorithmic}
\end{algorithm}
Under the $d$-localized constraint $\mathcal{L}_d$, both metrics are determined by the local node density rather than the total network size.
Because $\mathcal{L}_d$ restricts the control response to a spatial radius of $d+1$, any valid shortcut path $s \to i \to j \to t$ requires $\mathrm{dist}_{\bar{A}}(s \to i) + \mathrm{dist}_{\bar{A}}(j \to t) < d$, where $\mathrm{dist}_{\bar{A}}(\cdot)$ is defined in Section~\ref{subsec:system_model}. 
This guarantees $\mathrm{dist}_{\bar{A}}(s \to i) < d$ and $\mathrm{dist}_{\bar{A}}(j \to t) < d$.
Thus, $\dim(\mathcal{E}_{ij})$ depends solely on the local node density, not the system scale.
Similarly, for any coupling link $(p,q) \in \mathscr{I}_{ij}$ to share an input-output pair $(s,t)$ with $(i,j)$, nodes $p$ and $q$ must reside within the $d$-hop neighborhoods of $i$ and $j$.
Thus, $|\mathscr{I}_{ij}|$ also depends on the local node density, independent of the global system scale.
We empirically demonstrate this scalability in Section \ref{subsec:exp_ver_scalability}.

This local coupling structure enables a fully distributed and localized implementation of the algorithm when the base communication graph $\Gamma_{\mathrm{base}}$ is undirected.
To compute the per-link architecture update \eqref{eq:prox_subproblem2_decomp} for candidate link $(i,j)$, subsystem $i$ requires two local information sets.
First, it needs the column-wise solutions $\mathbf{\Phi}_{u1}^{k+1}$ from subsystems $s$ contributing to $\mathcal{E}_{ij}$, all of which satisfy $\mathrm{dist}_{\bar{A}}(s \to i) < d$.
Second, for the coupling information, $(p,q) \in \mathscr{I}_{ij}$ implies there exists a shared path $(s,t) \in \mathcal{E}_{ij} \cap \mathcal{E}_{pq}$. Given that an undirected base topology $\Gamma_{\mathrm{base}}$ satisfies the coverage condition in Definition~\ref{def:comm_coverage_cond}, applying the triangle inequality to these shared paths yields $\mathrm{dist}_{\Gamma_{\mathrm{base}}^{\lfloor v_c \rfloor}}(p \to i) + \mathrm{dist}_{\Gamma_{\mathrm{base}}^{\lfloor v_c \rfloor}}(q \to j) < 2d$, which guarantees that $\mathrm{dist}_{\Gamma_{\mathrm{base}}^{\lfloor v_c \rfloor}}(p \to i) < 2d$.
Consequently, the per-link architecture update~\eqref{eq:prox_subproblem2_decomp} for link $(i,j)$ can be solved through local communication with neighboring subsystems within a radius of $2d-1$ on $\Gamma_{\mathrm{base}}^{\lfloor v_c \rfloor}$, ensuring that the proposed algorithm operates in a distributed and localized manner.

\section{EXPERIMENTS}
\label{sec:EXPERIMENTS}

In this section, we evaluate the performance and scalability of the proposed communication architecture co-design framework.
In all experiments, the base communication graph was set identical to the physical topology (i.e., $\Gamma_{\mathrm{base}} = \bar A$).

\subsection{Efficacy of Co-Design Algorithm}
\label{subsec:exp_performance_trade_off}
To demonstrate the efficacy of the proposed co-design algorithm, we present two sets of experiments. 
The first experiment investigates the trade-off between closed-loop performance and communication complexity, validating the effectiveness of our method against baseline search algorithms.
The second experiment visualizes the designed communication topologies and evaluates their effectiveness on a grid topology composed of MIMO subsystems.

Consider the global plant model \eqref{eq:global_plant_model}.
The elements of the state matrix $A$ are randomly generated such that the entries of the diagonal blocks are sampled from a uniform distribution over $[0.1, 1.7]$, and the entries of the off-diagonal blocks are sampled from $[-1.3, -0.1] \cup [0.1, 1.3]$.
We set $[C_1 \quad D_{12}] = I$.
In both cases, we set the spatio-temporal constraint parameters $(d,T) = (4,5)$ and the communication speed to $v_c = 1.4$.
For the proximal ADMM parameter($\rho,c$), we set $\rho = 10$ and choose $c$ appropriately to guarantee convergence, as described in Section~\ref{subsec:admm_split_and_parallel_updates} (i.e., $c \geq \rho \cdot \max_{(i,j)\in\mathscr{E}}|\mathscr{I}_{ij}|$).
The specific values of $\max_{(i,j)\in\mathscr{E}}|\mathscr{I}_{ij}|$ for both experimental cases can be found in Section~\ref{subsec:exp_ver_scalability}.

\subsubsection{Trade-off Analysis on a Ring Topology}

\begin{figure}[t]
    \centering
    \includegraphics[width=\linewidth]{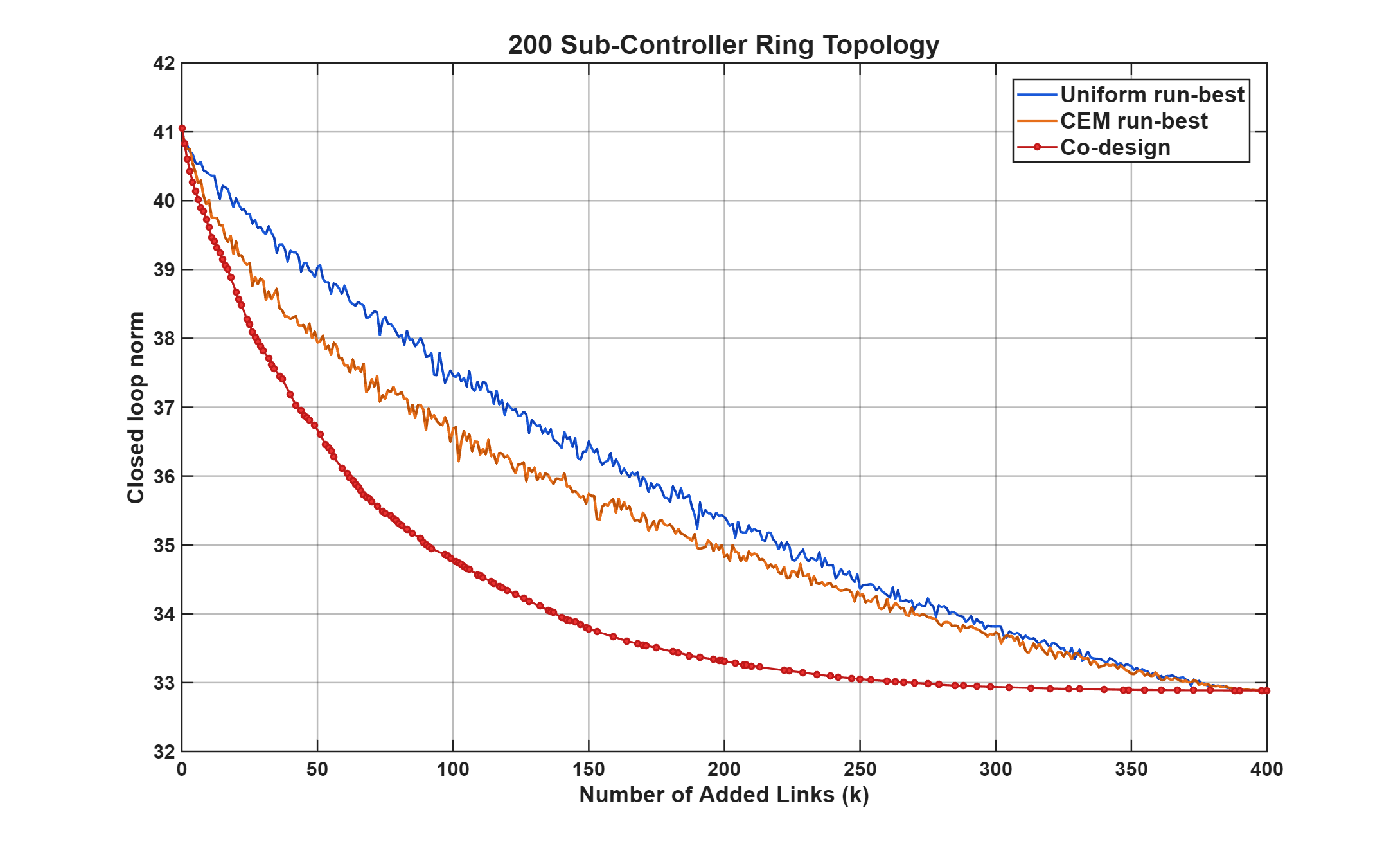}
    \caption{
    The red line represents the closed-loop $\mathcal{H}_2$ performance of the distributed optimal controller on communication graphs designed via Algorithm~\ref{alg:comm-codesign}. The orange and blue lines indicate the best performance found by the CEM and uniform search algorithms, respectively, for each fixed number of additional links.}
    \label{fig:chain_opt_compare}
\end{figure}

We conducted experiments on a 200-node ring topology. 
We assumed each subsystem $i$ had a scalar state $x_i \in \mathbb{R}$, setting $B_2=I_{N_s}$.
The spectral radius of $A$ was $3.2053$. The candidate edge set $\mathscr{E}$ comprised 2-hop neighbor connections. 
We executed the co-design algorithm detailed in Algorithm~\ref{alg:comm-codesign} by sweeping the regularization weight $\lambda$ over the interval $[0, 30]$, while setting the proximal ADMM parameter $c$ to 200.

\begin{figure}[t]
    \centering
    \includegraphics[width=0.7\linewidth]{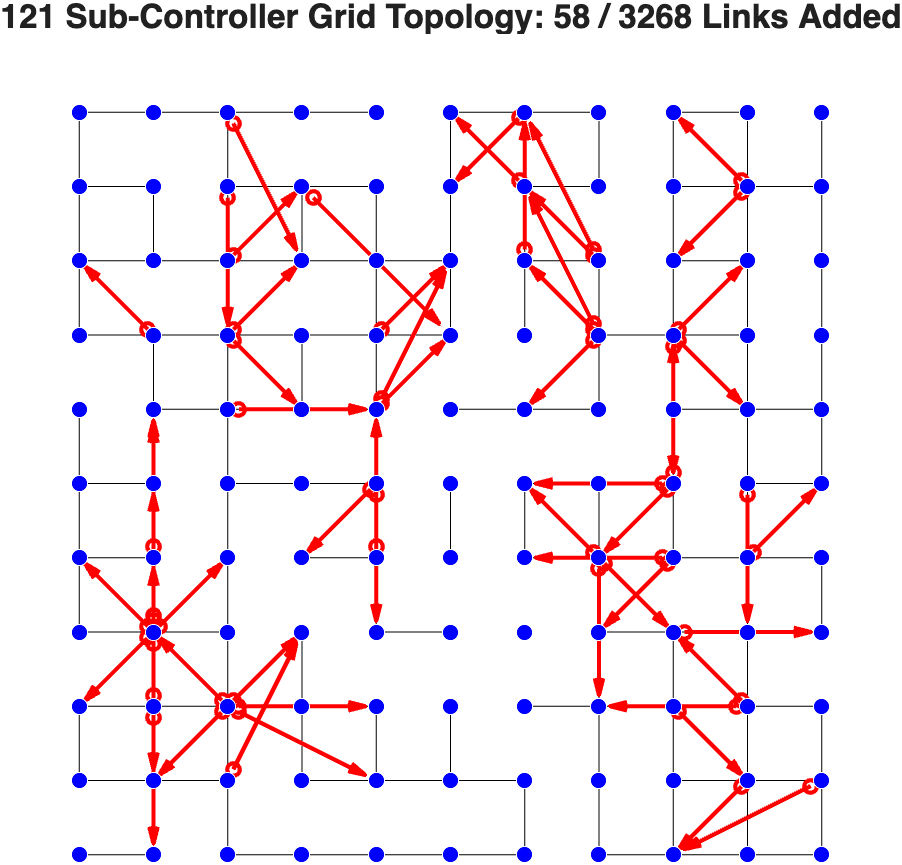}
    \caption{Blue dots represent the subsystems, and black lines indicate the physical topology, which serves as the base communication topology. Red arrows, with circled origins, denote the additional communication edges selected by the co-design algorithm, activating only 58 out of 3,268 candidate edges.}
    \label{fig:grid_topology_mimo}
\end{figure}

To establish baselines, we employed the Cross-Entropy Method (CEM) \cite{de2005tutorial}---a prominent stochastic search algorithm for combinatorial optimization---and a uniform distribution search algorithm to find the optimal sparse communication links. To ensure a rigorous comparison, we established the computational complexity of our ADMM-based solver in terms of equivalent full SLS evaluations and utilized this as the lower bound (LB) for the search algorithms' budget. Specifically, the total search budget for the baselines was set to $500,000$ evaluations, securely exceeding this $LB= 177,600$. 

Because in the proximal ADMM, both the column-wise SLS subproblems and the proximal subproblems admit closed-form updates, we approximate their individual computational costs to be equivalent. Therefore, the computational cost of a single ADMM iteration can be normalized by the cost of solving one full SLS problem (which consists of $N_{\mathrm{SLS}}$ subproblems). The LB of the total equivalent SLS evaluations was formulated as:
\begin{equation*}
    LB = \left( \overline{I}_{\mathrm{ADMM}} \times \left( 1+ \frac{N_{\mathrm{Arch}}}{N_{\mathrm{SLS}}}\right) \right) \times N_{\mathrm{edge}},
\end{equation*}
where $\overline{I}_{\mathrm{ADMM}}$ is the average number of ADMM iterations during $\lambda$ sweeping, $N_{\mathrm{SLS}}$ is the number of column-wise SLS subproblems, $N_{\mathrm{Arch}}$ is the number of the per-link architecture update~\eqref{eq:prox_subproblem2_decomp}, and $N_{\mathrm{edge}}$ is the number of candidate edges.
In this experiment, $\overline{I}_{\mathrm{ADMM}}= 147.99$, and $N_{\mathrm{edge}}=400$, therefore $LB= 177,600$.
Finally, the total search budget was distributed among the specific number of additional links proportionally to the logarithm of the combinatorial size of selecting those links from the candidate edge set.
Furthermore, we ensured that all evaluated samples were unique, avoiding any redundant computations within the given budget.

As shown in Fig. \ref{fig:chain_opt_compare}, increasing the regularization weight $\lambda$ results in simpler communication architectures through the co-design procedure.
Furthermore, for any fixed number of additional links, the distributed optimal controller implemented on the communication topology designed by the proposed co-design framework consistently outperforms the controller operating on the best topology obtained from the search algorithms.

\subsubsection{Example of Co-Design on a Grid Topology}
We evaluated the framework on a $11 \times 11$ grid topology whose subsystems have two-dimensional states. To construct the physical topology $\bar{A}$, edges between adjacent nodes were randomly generated with a probability of $0.7$, and the spectral radius of the state matrix $A$ was $3.6310$.
We assumed fully state-actuated subsystems ($B_2 = I_{N_s} \otimes I_2$).
For the candidate edge set $\mathscr{E}$, we considered all directed links between nodes with a hop distance of 1 to 4 that were not already present in $\Gamma_{\mathrm{base}}$, resulting in a total of $3,268$ candidate edges.
We set the regularization weight $\lambda = 100$ and the proximal ADMM parameter $c = 3,500$.

The co-design algorithm activated 58 of the 3,268 candidate edges (1.77\%), as shown in Fig.~\ref{fig:grid_topology_mimo}. 
Despite this highly sparse link addition, the co-designed topology yielded a closed-loop $\mathcal{H}_2$ norm of 121.4782, achieving a substantial 18.83\% reduction compared to the base topology's norm of 149.6607.

\subsection{Verification of Scalability} \label{subsec:exp_ver_scalability}
In the Sections~\ref{subsec:dist_impl}, we discussed that the computational complexity of the per-link architecture update~~\eqref{eq:prox_subproblem2_decomp} is independent of the global system scale, which guarantees the scalability of the proposed co-design framework.
To empirically verify this scalability, we evaluated the maximum dimension of the atomic subspace, $\max_{(i,j)} \dim(\mathcal{E}_{ij})$, and the maximum cardinality of the coupling set, $\max_{(i,j)} |\mathscr{I}_{ij}|$, across different network topologies. 

We conducted this analysis on both ring and grid topologies. For both cases, we fixed the spatio-temporal parameters as $(d,T)=(4,5)$, $v_c=1.4$ and incrementally increased the total number of subsystems $N_s$. 
For the ring topology, the candidate edge set $\mathscr{E}$ was generated using the 2-hop neighborhood. For the grid topology experiments, we considered two distinct settings for $\mathscr{E}$: the 8-neighbor (Moore) neighborhood, and all directed links with a hop distance of 1 to 4 excluding those in $\Gamma_{\mathrm{base}}$.
Each configuration was evaluated across 3 independent random seeds.

\begin{table}[ht]
\caption{ The maximum atomic subspace dimension and the maximum coupling set cardinality evaluated across different system scales for ring and grid topologies.}
\label{tab:scalability_metrics}
\centering
\begin{tabular}{c c c c}
\toprule
\textbf{Topology} & \makecell{\textbf{System scale} \\ ($N_s$)} & $\max \dim(\mathcal{E}_{ij})$ & $\max |\mathscr{I}_{ij}|$ \\
\midrule
\multirow{3}{*}{Ring} 
 & 50  & 7 & 6 \\
 & 100 & 7 & 6 \\
 & 200 & 7 & 6 \\
\midrule
\multirow{3}{*}{\makecell{Grid \\ (8-neighbor)}}
 & 121 ($11 \times 11$) & 34.0($\pm$ 7.8) & 34.6($\pm$ 3.0) \\
 & 196 ($14 \times 14$) & 36.6($\pm$ 2.5) & 38.0($\pm$ 2.6) \\
 & 324 ($18 \times 18$) & 35.0($\pm$ 1.0) & 39.3($\pm$ 2.0) \\
\midrule
\multirow{3}{*}{\makecell{Grid \\ (Hop 1-4)}} 
 & 121 ($11 \times 11$) & 41.3($\pm$ 2.3) & 293.0($\pm$ 14.0) \\
 & 196 ($14 \times 14$) & 45.6($\pm$ 3.0) & 320.6($\pm$ 16.5) \\
 & 324 ($18 \times 18$) & 44.3($\pm$ 2.8) & 310.6($\pm$ 22.5) \\
\bottomrule
\end{tabular}
\end{table}

The results are summarized in Table \ref{tab:scalability_metrics}.
As the number of subsystems $N_s$ scales up from $50$ to $200$ in the ring topology, both $\max \dim(\mathcal{E}_{ij})$ and $\max |\mathscr{I}_{ij}|$ remain completely constant.
Likewise, for the grid topology, these metrics remain nearly constant across varying system scales ($N_s = 121$ to $324$) within each candidate edge configuration, but differ distinctly between the two configurations.
Aligning with our discussion in Section~\ref{subsec:dist_impl}, the empirical data demonstrates that $\max \dim(\mathcal{E}_{ij})$ and $\max |\mathscr{I}_{ij}|$ depend on the local node density and are not affected by the global system scale.

\section{CONCLUSION} \label{sec:conclusion}

We proposed a scalable and convex framework for the co-design of communication architectures and distributed controllers within the SLS paradigm.
To this end, we characterized a communication topology-induced feasible subspace of system responses via adjacency matrix $\Gamma$ and a communication speed $v_c$. 
Based on this characterized subspace, we proposed the SLS formulation for communication graph co-design.
By leveraging the proximal ADMM, this problem can be solved in a scalable, distributed and localized manner.
Through numerical experiments, we demonstrated that this framework effectively navigates the trade-off between closed-loop performance and communication architecture complexity, with a computational cost that remains independent of the global system scale.
Future research directions include relaxing the ideal-link assumption to heterogeneous link speeds and to communication uncertainties, and extending the framework to measured output-feedback settings with partial state measurements and measurement noise.
Finally, it would be of interest to investigate the selection of a highly sparse, non-physical base graph $\Gamma_{base}$ and the subsequent link-addition methods necessary to ensure that the ultimate outcome $\Gamma_{des}$ satisfies the coverage condition.




\addtolength{\textheight}{-12cm}



\bibliographystyle{IEEEtran}           
\bibliography{IEEEabrv, refs}        

\end{document}